\newtheorem{theorem}{Theorem}
\newtheorem{lemma}[theorem]{Lemma}
\renewcommand{\eqref}[1]{Eq.~(\ref{#1})} 
\theoremstyle{plain}
\theoremstyle{plain}
\newenvironment{proof}[1][\protect\proofname]{\par
	\normalfont\topsep6\p@\@plus6\p@\relax
	\trivlist
	\itemindent\parindent
	\item[\hskip\labelsep\scshape #1]\ignorespaces
}{%
	\endtrivlist\@endpefalse
}
\providecommand{\proofname}{Proof}
\theoremstyle{plain}
\theoremstyle{remark}
\newcommand{\bra}[1]{\langle #1|}
\newcommand{\ket}[1]{|#1 \rangle}
\newcommand{\braket}[2]{\langle #1 \vert #2 \rangle}
\newcommand{\abs}[1]{\left|#1\right|}
\newcommand{\idg}[1]{{\bfseries #1)}}
\newcommand\numberthis{\addtocounter{equation}{1}\tag{\theequation}}
\providecommand{\factname}{Fact}
\providecommand{\theoremname}{Theorem}
\providecommand{\claimname}{Claim}
\providecommand{\lemmaname}{Lemma}
\providecommand{\definitionname}{Definition}
\definecolor{KB}{rgb}{0.4,0.3,0.9}
\definecolor{THc}{rgb}{0.9,0.3,0.2}
\newcommand{\revA}[1]{{#1}}
\newcommand{\revB}[1]{#1}
\theoremstyle{definition}
\newcommand{\subfigimg}[3][,]{%
	\setbox1=\hbox{\includegraphics[#1]{#3}}
	\leavevmode\rlap{\usebox1}
	\rlap{\hspace*{2pt}\raisebox{\dimexpr\ht1-0.5\baselineskip}{{\bfseries \large\textsf{#2}}}}
	\phantom{\usebox1}
}
\newcommand{\sectionMain}[1]{
\let\oldaddcontentsline\addcontentsline
\renewcommand{\addcontentsline}[3]{}
\section{#1}
\let\addcontentsline\oldaddcontentsline
}
\begin{document}

\title{Natural parameterized quantum circuit}

\author{Tobias Haug}
\email{tobias.haug@u.nus.edu}
\affiliation{QOLS, Blackett Laboratory, Imperial College London SW7 2AZ, UK}
\author{M. S. Kim}
\affiliation{QOLS, Blackett Laboratory, Imperial College London SW7 2AZ, UK}
\begin{abstract}
Noisy intermediate scale quantum computers are useful for various tasks such as state preparation and variational quantum algorithms. However, the non-euclidean quantum geometry of parameterized quantum circuits is detrimental for these applications. 
Here, we introduce the natural parameterized quantum circuit (NPQC) that can be initialised with a euclidean quantum geometry. 
The initial training of variational quantum algorithms is substantially sped up as the gradient is equivalent to the quantum natural gradient.
Further, we show how to estimate the parameters of the NPQC by sampling the circuit, which could be used for benchmarking or calibrating NISQ hardware. For a general class of quantum circuits, the NPQC has the minimal quantum Cramér-Rao bound which highlights its potential for quantum metrology.  Finally, we show how to generate arbitrary superpositions of two states with the NPQCs for state preparation tasks.
Our results can be used to enhance currently available quantum processors.
\end{abstract}

\maketitle

\sectionMain{Introduction}
A growing number of applications for noisy intermediate scale quantum (NISQ) computers have been proposed~\cite{preskill2018quantum,bharti2021noisy} to make use of quantum computers available now and in the near future.
Variational quantum algorithms (VQAs) can help with tasks difficult for classical computers~\cite{peruzzo2014variational,kandala2017hardware,mcclean2016theory,cerezo2020variational} such as finding the ground state of Hamiltonians~\cite{peruzzo2014variational} or simulating quantum dynamics~\cite{otten2019noise,barison2021efficient}. A major obstacle for practical applications is the long training time of VQAs~\cite{lau2021quantum,bittel2021training,self2021variational}.
\revA{As further application, NISQ devices can be used to estimate parameters of the underlying quantum state~\cite{garciaperez2021learning,kaubruegger2021quantum,marciniak2021optimal}. A major challenge here is finding protocols and circuits that perform well in multi-parameter estimation tasks~\cite{szczykulska2016multi,meyer2020variational}.}

Parameterized quantum circuits (PQCs) are the basis of most NISQ algorithms. It is challenging to design PQCs that can efficiently run NISQ applications~\cite{haug2021capacity,nakaji2020expressibility,sim2019expressibility,du2020expressive}. The quantum geometry of PQCs as measured by the quantum Fisher information metric (QFIM) plays a key role in this regard~\cite{haug2021capacity,meyer2021fisher,katabarwa2021connecting}. VQAs can be trained more efficiently by using the QFIM for adaptive learning rates~\cite{haug2021optimal} and the quantum natural gradient (QNG)~\cite{haug2021optimal,stokes2020quantum,yamamoto2019natural,wierichs2020avoiding}.
For quantum sensing, the QFIM places a lower bound on the estimation error with the quantum Cramér-Rao bound~\cite{helstrom1976quantum,liu2019quantum,meyer2021fisher}. 
However, in general the QFIM of PQCs is not characterized, requires extensive resources to be a calculated~\cite{cerezo2021sub,gacon2021simultaneous,beckey2020variational,van2020measurement} and yields a non-euclidean geometry~\cite{haug2021capacity}, which is detrimental to tackle aforementioned tasks.

Here, we introduce the natural PQC (NPQC) which has a euclidean quantum geometry close to a particular reference parameter.
This expressive NPQC can be constructed in a hardware efficient manner even for many qubits and parameters, serving as a powerful basis for various NISQ applications.
The initial training iterations of VQAs with the NPQC are substantially improved as the gradient is equivalent to the QNG and we can use adaptive learning rates without needing to calculate the QFIM. We find that the first training step scales with increasing number of qubits which hints that our methods work even for larger systems.
\revA{Further, we demonstrate that the NPQC can prepare arbitrary superposition states of two states, a feature that can be useful for state preparation tasks.}
\revA{Finally, we show that by sampling the NPQC, one can estimate the absolute values of all the parameters of the circuit, which could be used for calibration purposes.
We also show that the NPQC has the minimal possible quantum Cramér-Rao bound for a general class of PQCs, which highlights its potential for multi-parameter metrology.}
These convenient properties make the NPQC a useful basis for various NISQ applications.

\begin{figure*}[htbp]
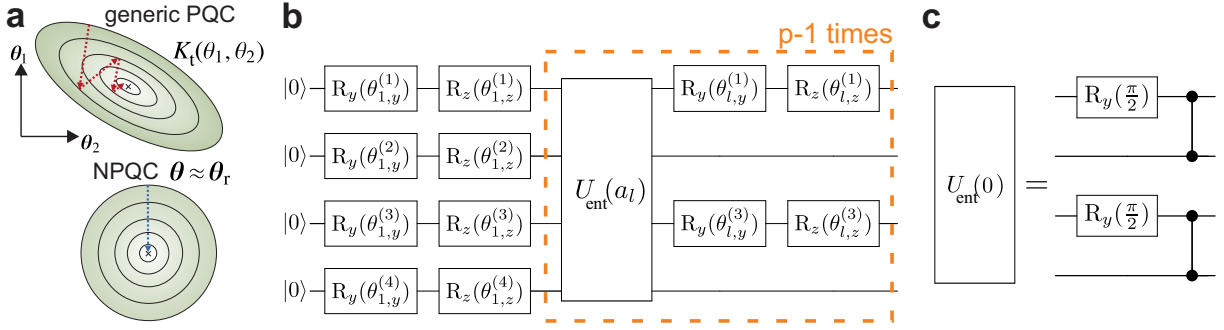

	\centering
	\subfigimg[width=0.9\textwidth]{}{NaturalPQCSketch.pdf}
	\caption{\idg{a} The fidelity landscape $K_\text{t}(\boldsymbol{\theta})=\abs{\braket{\psi(\boldsymbol{\theta}_\text{t})}{\psi(\boldsymbol{\theta}_\text{t}')}}^2$ as a function of parameter $\boldsymbol{\theta}$ of a parameterized quantum circuit (PQC). For generic PQCs, the landscape is non-euclidean and the parameters are distorted, which is characterized by the quantum Fisher information metric (QFIM) $\mathcal{F}(\boldsymbol{\theta})\ne cI$, where $c>0$ and $I$ is the identity matrix. Training with gradient ascent (dashed line) is challenging as the gradient does not point into the optimal direction.
	For a natural PQC (NPQC) the fidelity landscape is euclidean with $\mathcal{F}(\boldsymbol{\theta})=cI$. Thus, the standard gradient gives the optimal direction, is equivalent to the quantum natural gradient (QNG) and one can use adaptive learning rates to speed up training (\eqref{eq:update_add}).
	\idg{b}  Hardware efficient implementation of the NPQC composed of single qubit rotations and CPHASE gates with $N$ qubits.
	The NPQC has a euclidean quantum geometry at the reference parameter $\boldsymbol{\theta}\approx\boldsymbol{\theta}_\text{r}$ (\eqref{eq:ref_params}) with QFIM $\mathcal{F}(\boldsymbol{\theta}_\text{r})=I$. 
	\idg{c} Example of an entangling layer $U_\text{ent}(0)$ of the NPQC, consisting of $\frac{N}{2}$ CPHASE gates and single qubit rotations.
	}
	\label{fig:sketch}
\end{figure*}

\sectionMain{Model}
The QFIM $\mathcal{F}(\boldsymbol{\theta})$ for a PQC $\ket{\psi}=\ket{\psi(\boldsymbol{\theta})}$ and $M$-dimensional parameter vector $\boldsymbol{\theta}\in\mathbb{R}^M$ is an $M\times M$ dimensional positive semidefinite matrix~\cite{liu2019quantum,meyer2021fisher}
\begin{equation}\label{eq:quantumFisher}
\mathcal{F}_ {ij}(\boldsymbol{\theta})=4[\braket{\partial_i \psi}{\partial_j \psi}-\braket{\partial_i \psi}{\psi}\braket{\psi}{\partial_j \psi}]\,,
\end{equation} 
where $\partial_j\ket{\psi}$ is the gradient in respect to parameter $\theta_j$. 
The QFIM $\mathcal{F}(\boldsymbol{\theta})$ is a metric that relates fidelity of the quantum state with the distance in parameter space $\boldsymbol{\theta}$. When varying the parameter of the quantum state $\ket{\psi(\boldsymbol{\theta}+\text{d}\boldsymbol{\mu})}$ by a small $\text{d}\boldsymbol{\mu}$, the fidelity is given by 
\begin{equation}\label{eq:fidelityQFIM}
    \vert\braket{\psi(\boldsymbol{\theta})}{\psi(\boldsymbol{\theta}+\text{d}\boldsymbol{\mu})}\vert^2=1-\frac{1}{4}\text{d}\boldsymbol{\mu}^\text{T}\mathcal{F}(\boldsymbol{\theta})\text{d}\boldsymbol{\mu}\,.
\end{equation}
The variation of the distance in parameter space has a non-equal influence on the quantum state for generic PQCs with $\mathcal{F}(\boldsymbol{\theta})\ne cI$, where $I$ is the identity matrix and $c>0$. This non-euclidean nature of the PQC materializes in the QFIM, which acquires off-diagonal and unequal diagonal entries. 
A pictorial description of a non-euclidean fidelity landscape is shown in the upper graph of Fig.\ref{fig:sketch}a.
When the quantum geometry is euclidean with $\mathcal{F}(\boldsymbol{\theta})=cI$, then all the parameters $\boldsymbol{\theta}_i$, $\boldsymbol{\theta}_j$ are uncorrelated and they change the quantum state into orthogonal directions in the same proportional manner (see lower graph of Fig.\ref{fig:sketch}a). 
We define the NPQC as a PQC with a euclidean quantum geometry for a set of parameters. 

A hardware efficient construction of the NPQC is shown in Fig.\ref{fig:sketch}b. It consists of $N$ qubits ($N$ even) and $p$ layers of unitaries $U_l(\boldsymbol{\theta}_l)$ with quantum state $U(\boldsymbol{\theta})\ket{0}=\prod_{l=1}^p U_l(\boldsymbol{\theta}_l)\ket{0}^{\otimes N}$ parameterized by the $M$-dimensional parameter vector $\boldsymbol{\theta}\in\mathbb{R}^M$.
The first layer consists of $2N$ single qubit rotations around $y$ and $z$ axis applied on each qubit $n$ with $U_1=\prod_{n=1}^N R_z^{(n)}(\theta_{1,z}^{(n)})R_y^{(n)}(\theta_{1,y}^{(n)})$,
where $R_\alpha^{(n)}(\theta)=\exp(-i\frac{\theta}{2}\sigma^\alpha_n)$, $\alpha\in\{x,y,z\}$ and $\sigma^\alpha_n$ are the Pauli matrices applied on qubit $n$.
Each further layer ${l>1}$ is composed of a product of two qubit entangling gates and $N$ parameterized single qubit rotations given by
$U_l(a_l)=\prod_{k=1}^{N/2}[R_z^{(2k-1)}(\theta_{l,z}^{(2k-1)})R_y^{(2k-1)}(\theta_{l,y}^{(2k-1)})]U_\text{ent}(a_l)$, where $U_\text{ent}(a_l)=\prod_{k=1}^{N/2}\text{CPHASE}(2k-1,2k+2a_l)R_y^{(2k-1)}(\pi/2)$ and $\text{CPHASE}(n,m)$ is the controlled $\sigma^z$ gate applied on qubit index $n$, $m$, where indices larger than $N$ are taken modulo (see Fig.\ref{fig:sketch}c).
The shift factor $a_l\in\{0,1,\dots,N/2-1\}$ as a function of layer $l$ is defined via the following recursive rule.
Initialise the set $A=\{0,1,\dots,N/2-1\}$ and $s=1$. In each iteration, pick and remove one element $r$ from $A$. 
Then set $a_s=r$ and $a_{s+q}=a_{q}$ for $q=\{1,\dots,s-1\}$. As the last step, we set $s=2s$. We repeat this procedure until no elements are left in $A$ or the desired depth $p$ is reached. 
Our construction has up to $p_\text{max}=2^{N/2}$ layers with in total $M=N(p+1)$ parameters. 
The NPQC has a euclidean geometry where the QFIM is the identity for the reference parameter $\boldsymbol{\theta}_\text{r}$ given by
\begin{equation}\label{eq:ref_params}
\mathcal{F}(\boldsymbol{\theta}_\text{r})=I\hspace{0.3cm}\text{for}\hspace{0.3cm} \theta_{\text{r},l,y}^{(n)}=\pi/2\,,\hspace{0.3cm}\theta_{\text{r},l,z}^{(n)}=0 \,.
\end{equation}
\revA{The QFIM being identity means that variations of the parameters are independent of each other and lead to an orthogonal change in the space of quantum states.}
\revB{We numerically checked the QFIM $\mathcal{F}(\boldsymbol{\theta}_\text{r})=I$ for the NPQC for all $p$ and up to $N=14$ qubits, and given its regular structure we believe it will apply for any $N$. }
While the euclidean geometry is exactly valid only for $\boldsymbol{\theta}_\text{r}$, we find that it remains nearly euclidean in the vicinity of $\boldsymbol{\theta}\approx\boldsymbol{\theta}_\text{r}$. 
The QFIM $\mathcal{F}(\ket{\psi})=\mathcal{F}(V\ket{\psi})$ is invariant under application of arbitrary unitaries $V$~\cite{meyer2021fisher}.
Thus, the euclidean quantum geometry is preserved even if we apply additional unitaries on the NPQC.
We can prepare arbitrary reference states $\ket{\psi_\text{r}}=V_\text{ref}\ket{0}=\ket{\psi(\boldsymbol{\theta}_\text{r})}$ with the unitary $V_\text{ref}$
\begin{equation}\label{eq:initial_npqc}
\ket{\psi(\boldsymbol{\theta})}=V_\text{ref}U_\text{fix}^\dagger U(\boldsymbol{\theta})\ket{0}\,,
\end{equation}
where $U_\text{fix}=U(\boldsymbol{\theta}_\text{r})$ such that $U_\text{fix}^\dagger U(\boldsymbol{\theta}_\text{r})=I$. 
In general the NPQC is intractable for classical computers, however for the particular case $\boldsymbol{\theta}_\text{r}$ and $V_\text{ref}=I$ the NPQC is Clifford with an efficient simulation on classical computers~\cite{aaronson2004improved}. 

\revB{With our construction, the NPQC can yield up to $M\le N(2^{N/2}+1)$ parameters. We note that it is possible to extend the NPQC up to $2^{N+1}-2$ parameters such that all possible quantum states can be expressed~\cite{haug2021capacity}. However, we note that in this case a lower number of parameters per layer is achieved, and we were unable to find a general way to construct the circuit. }

\revB{\sectionMain{Expressibility and trainability}
First, we study the expressibility and trainability of the NPQC. To this end, we initialise the circuit with random parameters $\boldsymbol{\theta}_\text{rand}\in[0,2\pi]$ for different qubit number $N$ and depth $p$. 

Expressibility measures how well random instances of the circuit sample uniformly the Hilbertspace. We measure the expressibility with the frame potential~\cite{sim2019expressibility}
\begin{equation}
F_t=\int_{\boldsymbol{\theta}}\int_{\boldsymbol{\phi}} \vert\braket{\psi(\boldsymbol{\theta})}{\psi(\boldsymbol{\phi})}\vert^{2t} \text{d}\boldsymbol{\theta}\text{d}\boldsymbol{\phi}
\end{equation}
which measures the closeness to a $t$-design, i.e. how well random instances of the circuit approximate Haar random unitaries up to $t$th order. For Haar random unitaries and $t=2$, we have the minimal value $F_2^{\text{Haar}}=((2^N+1)2^{N-1})^{-1}$. We compute the expressibility $F_2$ by averaging over the square of the fidelity for randomly sampled instances of the circuit.

The trainability is measured with the variance of the gradients $\text{var}(\nabla E)$ in respect to a cost function $E$~\cite{mcclean2018barren}. For deep circuits, for many types of circuits the variance of the gradient decays exponentially with number of qubits, which is called barren plateaus. As the gradients are too small to be measured, barren plateaus are not trainable.
As cost function, we choose here $E=\bra{\psi}\sigma^z_1\sigma^z_2\ket{\psi}$. Note that the exact form of a local cost function has only negligible effect on the variance of the gradient~\cite{mcclean2018barren}.

As reference, we compare the performance of the NPQC with another hardware efficient circuit.
We choose the YZ-CNOT circuit, which is composed of layers of random parameterized $R_y$ and $R_z$ rotations with an entangling layer $U_\text{n.n-CNOT}$ of CNOT gates arranged in a nearest-neighbor chain configuration   $\ket{\psi_{\text{YZ-CNOT}}(\boldsymbol{\theta})}=\prod_{n=1}^p\prod_{k=1}^N R_z^{(k)}(\theta_{n,z}^{(k)})R_y^{(k)}(\theta_{n,y}^{(k)})U_\text{n.n-CNOT}$.

In Fig.\ref{fig:expressbility}a we study the variance of the gradient $\text{var}(\nabla E)$ against number of parameters $M$ of the circuit. We find that the variance decreases exponentially with $M$ and converges to a constant for sufficiently large $M$. We find that the variance is much larger for the NPQC compared to the YZ-CNOT circuit. 
In Fig.\ref{fig:expressbility}b we show the frame potential $F_2$ against $M$. We find that it decreases with $M$ and converges to a constant. While YZ-CNOT converges to the Haar random value, the NPQC has a larger $F_2$. 
In Fig.\ref{fig:expressbility}c, we plot the variance of the gradient against $N$ for deep circuits, such that variance and $F_2$ have converged. We find that the gradient decays exponentially for both NPQC and YZ-CNOT, with the NPQC showing a much slower descent. This implies that the NPQC can remain trainable even for relatively large $N$ compared to other hardware efficient circuits. 
In Fig.\ref{fig:expressbility}d, we plot the frame potential $F_2$. While the YZ-CNOT has a frame potential matching a Haar random unitary, the NPQC has larger $F_2$.

}
\begin{figure}[htbp]
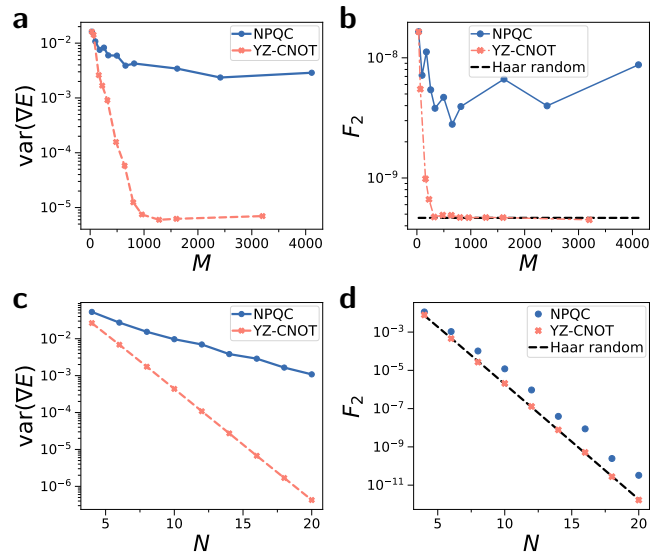

	\centering	
	\subfigimg[width=0.24\textwidth]{a}{gradLogJuliaEvalresQNGN16D1C1o7p0r100c28R2S0g500a0_001i1q0.pdf}\hfill
	\subfigimg[width=0.24\textwidth]{b}{framePotJuliaEvalresQNGN16D1C1o7p0r100c28R2S0g500a0_001i1q0.pdf}
	\subfigimg[width=0.24\textwidth]{c}{gradLogJuliaEvalresQNGN4D4C1o7p0r100c28R2S0g500a0_001i1q0.pdf}\hfill
	\subfigimg[width=0.24\textwidth]{d}{framePotJuliaEvalresQNGN4D4C1o7p0r100c28R2S0g500a0_001i1q0.pdf}
	\caption{Expressibility and trainability of the NPQC and YZ-CNOT circuit as defined in main text. \idg{a} Variance of gradient $\text{var}(\nabla E)$ for $E=\bra{\psi}\sigma^z_1\sigma^z_2\ket{\psi}$ against number of parameters $M$ for qubit number $N=16$ \idg{b} Frame potential $F_2$ against $M$. 
	\idg{c} $\text{var}(\nabla E)$  against number of qubits $N$. We choose deep circuits with $p=2^{N/2}$ for the NPQC and $p=5N$ for the YZ-CNOT circuit. \idg{d} $F_2$ against $N$.
	We compute all values over 100 random instances of the circuits. }
	\label{fig:expressbility}
\end{figure}

Next, we investigate different applications of the NPQC for speeding up the training VQAs, preparing arbitrary superposition states, and multi-parameter metrology.

\sectionMain{Training VQAs}
VQAs solve tasks by optimizing the parameters $\boldsymbol{\theta}$ of the PQC in respect to a cost function. \revA{Various types of cost function have been studied such as the energy or fidelity. Due to the close connection of QFIM to fidelity, we concentrate here on the problem of learning a quantum state $\ket{\psi_\text{t}}$, an important subroutine in many VQAs~\cite{otten2019noise,benedetti2019generative,barison2021efficient,gibbs2021long}.}
The goal is to learn the target parameters $\boldsymbol{\theta}_\text{t}=\text{argmax}_{\boldsymbol{\theta}}K_\text{t}(\boldsymbol{\theta})$ by maximizing the fidelity
\begin{equation}
K_\text{t}(\boldsymbol{\theta})=\abs{\braket{\psi_\text{t}}{\psi(\boldsymbol{\theta})}}^2\,.
\end{equation}
We optimize the parameters iteratively with gradient ascent~\cite{peruzzo2014variational} via $\boldsymbol{\theta}'=\boldsymbol{\theta}+\alpha \nabla K_\text{t}(\boldsymbol{\theta})$, where $\alpha$ is the learning rate and $\nabla K_\text{t}(\boldsymbol{\theta})$ is the gradient, which points in the direction of steepest change of the cost function. 
As seen in Fig.\ref{fig:sketch}a, the gradient $\nabla K_\text{t}(\boldsymbol{\theta})$ is not the best choice for optimization as it implicitly assumes that the landscape is euclidean~\cite{amari2016information,stokes2020quantum,yamamoto2019natural}. 
To amend the non-euclidean nature, one can transform the gradient into the QNG ($\mathcal{F}^{-1}(\boldsymbol{\theta})\nabla K_\text{t}(\boldsymbol{\theta})$) by using the inverse of the QFIM $\mathcal{F}^{-1}(\boldsymbol{\theta})$~\cite{stokes2020quantum}. 
However, this transformation requires knowledge about the QFIM which can be difficult to acquire~\cite{amari2016information,stokes2020quantum,yamamoto2019natural}. 
As the NPQC has a euclidean geometry for the reference parameter ($\mathcal{F}^{-1}(\boldsymbol{\theta}_\text{r})=I$, \eqref{eq:ref_params}), the gradient and the QNG are equivalent 
\begin{equation}
    \nabla K_\text{t}(\boldsymbol{\theta}_\text{r})=\mathcal{F}^{-1}(\boldsymbol{\theta}_\text{r})\nabla K_\text{t}(\boldsymbol{\theta}_\text{r})\,,
\end{equation}
allowing us to perform the first training step with optimal geometry without needing to compute the QFIM (see lower graph of Fig.\ref{fig:sketch}a). 
To further improve training, we can replace the heuristic learning rate $\alpha$ with adaptive learning rates $\alpha_\text{t}(\boldsymbol{\theta})$ that change during the training. It has been shown that the fidelity $K$ of hardware efficient PQCs takes an approximate Gaussian form~\cite{haug2021optimal}
\begin{equation}\label{eq:kernel}
\mathcal{K}(\boldsymbol{\theta},\boldsymbol{\theta}')=\abs{\braket{\psi(\boldsymbol{\theta})}{\psi(\boldsymbol{\theta}')}}^2\approx\text{exp}[-\frac{1}{4}\Delta \boldsymbol{\theta}^{\text{T}}\mathcal{F}(\boldsymbol{\theta})\Delta \boldsymbol{\theta}]\,.
\end{equation}
Close to the reference parameter $\boldsymbol{\theta}\approx\boldsymbol{\theta}_\text{r}$, we have  $\mathcal{F}(\boldsymbol{\theta}\approx\boldsymbol{\theta_\text{r})}\approx I$. Together with \eqref{eq:kernel}, we find that the best choice of learning rate $\alpha_\text{t}(\boldsymbol{\theta})$ is given by (see Appendix~\ref{sec:gradient_sup} or~\cite{haug2021optimal})
\begin{align*}
\boldsymbol{\theta}_1=&\boldsymbol{\theta}+\alpha_1 \nabla K_\text{t},\hspace{0.4cm}\alpha_1=\frac{2\sqrt{-\log(K_\text{t}(\boldsymbol{\theta}))}}{\abs{\nabla K_\text{t}(\boldsymbol{\theta})}}\\
\alpha_\text{t}(\boldsymbol{\theta})=&\frac{2}{\alpha_1 \abs{\nabla K_\text{t}(\boldsymbol{\theta})}^2}\log\left(\frac{K_\text{t}(\boldsymbol{\theta}_1)}{K_\text{t}(\boldsymbol{\theta})}\right)+\frac{\alpha_1}{2}\,.\numberthis \label{eq:update_add}
\end{align*}
The adaptive learning rates combined with the inherent QNG can improve the training of VQAs. We initialise the NPQC with parameter $\boldsymbol{\theta}_\text{r}$ and choose any desired initial state via \eqref{eq:initial_npqc}. Then, we proceed to train the VQA for a few iterations with the adaptive learning rates.
After a few training iterations, the parameter of the NPQC $\boldsymbol{\theta}$ may not be close to $\boldsymbol{\theta}_\text{r}$ anymore and the QFIM can acquire substantial off-diagonal entries. At this point, our assumption $\mathcal{F}=I$ breaks down and we switch to a heuristic learning rate. Nonetheless, improving the initial training iterations can already give us a speed up in training VQAs.

\begin{figure}[htbp]
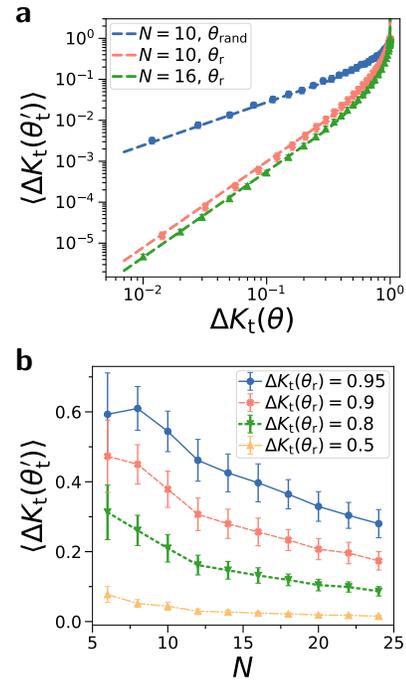

	\centering	
	\subfigimg[width=0.3\textwidth]{a}{deltaFQutipEvalQGN10d20e9.pdf}\hfill
	\subfigimg[width=0.3\textwidth]{b}{deltaFJuliaEvalresQNGN6D8C28.pdf}
	\caption{\idg{a} Average infidelity $\langle\Delta K_\text{t}(\boldsymbol{\theta}_\text{t}')\rangle$ after one iteration of gradient ascent with adaptive learning rates. The initial parameters of the NPQC are random $\boldsymbol{\theta}_\text{rand}\in[0,2\pi]$ ($\mathcal{F}(\boldsymbol{\theta}_\text{rand})\ne I$, blue curve) or the reference parameter $\boldsymbol{\theta}_\text{r}$ ($\mathcal{F}(\boldsymbol{\theta}_\text{r})=I$, orange and green curves). We show $\langle\Delta K_\text{t}(\boldsymbol{\theta}_\text{t}')\rangle$ against initial infidelity $\Delta K_\text{t}(\boldsymbol{\theta})$. Dashed lines are fits with $\Delta K_\text{t}(\boldsymbol{\theta}_\text{t}') =-c\log^\nu[1- \Delta K_\text{t}(\boldsymbol{\theta}) ]$ with $\nu=1$ for $\boldsymbol{\theta}_\text{rand}$ and $\nu=2$ for $\boldsymbol{\theta}_\text{r}$.
	The scaling factors of the fits are $c=\{6\cdot 10^{-2},4.7\cdot 10^{-3},2.7\cdot 10^{-3}\}$ and number of qubits $N=10$.
    \idg{b} Average infidelity after a single iteration of gradient ascent $\langle \Delta K_\text{t}(\boldsymbol{\theta}_\text{t}')\rangle$ plotted against number of qubits $N$ for varying infidelity before the step $\Delta K_\text{t}(\boldsymbol{\theta}_\text{r})$. Number of layers is $p=10$ and data is averaged over 50 random instances \revB{for both figures, where the error bars show the standard deviation.}
	}
	\label{fig:training_single}
\end{figure}
Here, we numerically study the performance of NPQCs for learning a target quantum state $\ket{\psi_\text{t}}=\ket{\psi(\boldsymbol{\theta}_{\text{t}})}$~\cite{yao,johansson2012qutip}. \revB{We measure the quality of the found target parameter $\boldsymbol{\theta}_\text{t}'$ with the infidelity
$\Delta K_\text{t}(\boldsymbol{\theta}_\text{t}')=1-K_\text{t}(\boldsymbol{\theta}_\text{t}')$ between trained state $\ket{\psi(\boldsymbol{\theta}_{\text{t}}')}$ and the actual target state. 
We generate random target parameters $\boldsymbol{\theta}_\text{t}=\boldsymbol{\theta}_\text{ini}+\Delta \boldsymbol{\theta}$ by shifting the initial parameters $\boldsymbol{\theta}_\text{ini}$ with a randomly chosen $\Delta \boldsymbol{\theta}$, where we achieve a desired initial infidelity $\Delta K_\text{t}(\boldsymbol{\theta}_\text{ini})$ by choosing the norm of $\Delta \boldsymbol{\theta}$ according to \eqref{eq:kernel}. }
First, in Fig.\ref{fig:training_single} we study the performance of a single iteration of gradient descent.
In Fig.\ref{fig:training_single}a, we plot the infidelity after one iteration of gradient ascent $\langle \Delta K_\text{t}(\boldsymbol{\theta}_\text{t}') \rangle$ using adaptive learning rates for varying initial infidelity $\Delta K_\text{t}(\boldsymbol{\theta})$. We compare training starting with random initial parameters of the NPQC $\boldsymbol{\theta}=\boldsymbol{\theta_\text{rand}}\in[0,2\pi]$ ($\mathcal{F}(\boldsymbol{\theta}_\text{rand})\ne I$) against the reference parameter $\boldsymbol{\theta}_\text{r}$ with euclidean quantum geometry ($\mathcal{F}(\boldsymbol{\theta}_\text{r})=I$). Training with the euclidean starting point $\boldsymbol{\theta}_\text{r}$ outperforms the randomly chosen parameters.
In Fig.\ref{fig:training_single}b, we observe that the infidelity after one iteration of gradient ascent $\langle\Delta K_\text{t}(\boldsymbol{\theta}_\text{t}')\rangle$ decreases with increasing qubit number $N$, demonstrating improved performance when scaling up the number of qubits.

\begin{figure}[htbp]
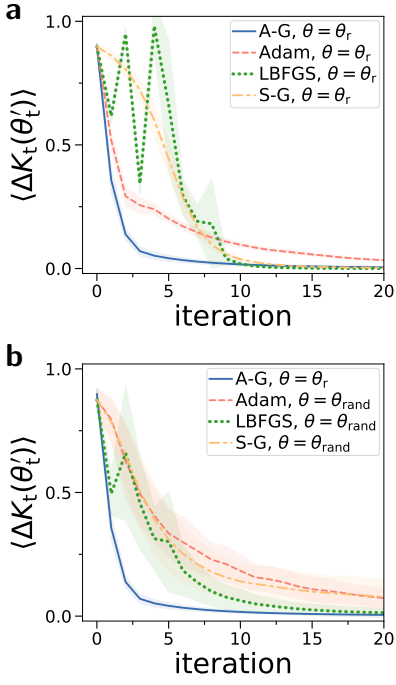

	\centering	
	\subfigimg[width=0.3\textwidth]{a}{trainStdQutipEvalQGN10d20e9a0_9.pdf}\hfill
	\subfigimg[width=0.3\textwidth]{b}{trainStdrandQutipEvalQGN10d20e9a0_9.pdf}
	\caption{ 
	\idg{a} Training NPQC with initial parameter $\boldsymbol{\theta}_\text{r}$ and initial infidelity $\Delta K_\text{t}(\boldsymbol{\theta}_\text{r})=0.9$. Shaded area is standard deviation of infidelity over 50 random instances of the target state. We compare adaptive gradient ascent (A-G, adaptive learning rates for first 3 iterations, then fixed learning rate $\alpha=0.5$), Adam, LBFGS and standard gradient ascent (S-G, learning rate $\alpha=1$). 
	\idg{b} Training starting with reference parameter $\boldsymbol{\theta}_\text{r}$ and random parameter $\boldsymbol{\theta}_\text{rand}\in[0,2\pi]$ with $\Delta K_\text{t}(\boldsymbol{\theta}_\text{rand})=0.9$. 
	}
	\label{fig:training}
\end{figure}

Next, we study training over multiple iterations in Fig.\ref{fig:training}.
In Fig.\ref{fig:training}a, we show training with the NPQC using various optimization methods using $\boldsymbol{\theta}_\text{r}$ as initial parameter. We compare adaptive gradient ascent (A-G) with standard methods such as Adam~\cite{kingma2014adam}, LBFGS~\cite{fletcher2013practical} and standard gradient ascent with fixed learning rate (S-G). For adaptive gradient ascent, we use~\eqref{eq:update_add} for the first three training iterations, then switch to a heuristic learning rate as the QFIM becomes non-euclidean. 
In Fig.\ref{fig:training}b, we compare training with $\boldsymbol{\theta}_\text{r}$ as initial parameter against training with random initial parameters $\boldsymbol{\theta}_\text{rand}\in[0,2\pi]$. 
We find that adaptive gradient ascent performs superior compared to the others methods as the first training iterations can leverage the euclidean quantum geometry to provide a substantial speed up. We show further data on training VQAs in Appendix~\ref{sec:training_sub}.

\sectionMain{Generating superposition states}\label{sec:superposition_sup}
\revA{Next, we show that the special structure of the NPQC allows us prepare arbitrary superposition states of two states. The NPQC can generate superposition states of the reference state $\ket{\psi(\boldsymbol{\theta}_\text{r})}$ and a given target state $\ket{\psi(\boldsymbol{\theta}_\text{t})}$ with target parameters $\boldsymbol{\theta}_\text{t}$. We want to find the parameters $\boldsymbol{\theta}_\text{s}$ for the superposition state
\begin{equation}
\ket{\psi_\text{s}}=\ket{\psi(\boldsymbol{\theta}_\text{s})}=\gamma_\text{r}\ket{\psi(\boldsymbol{\theta}_\text{r})}+\gamma_\text{t}\ket{\psi(\boldsymbol{\theta}_\text{t})}+\gamma_\perp\ket{\psi_{\perp}}\,,
\end{equation}
where $\ket{\psi_{\perp}}$ is orthogonal to both reference and target states.
Now, the NPQC can generate superposition states $\ket{\psi_\text{s}}$ with tailored fidelities with the reference state $K_\text{r,s}=\abs{\braket{\psi(\boldsymbol{\theta}_\text{r})}{\psi_\text{s}}}^2$ and the target state $K_\text{t,s}=\abs{\braket{\psi(\boldsymbol{\theta}_\text{t})}{\psi_\text{s}}}^2$ to our choosing.
We define $\Delta \boldsymbol{\theta}_\text{r,s}=\boldsymbol{\theta}_\text{s}-\boldsymbol{\theta}_\text{r}$ as the difference between the parameters of the superposition state and target state, as well as $\Delta \boldsymbol{\theta}_\text{r,t}=\boldsymbol{\theta}_\text{t}-\boldsymbol{\theta}_\text{r}$ as the difference between the parameters of the target and reference state. 
Using \eqref{eq:kernel}, we calculate the relation between fidelity and parameter distance for the reference and superposition state
\begin{equation}
\abs{\Delta \boldsymbol{\theta}_\text{r,s}}^2=-4\log(K_\text{r,s})\,.
\end{equation}
For the target state and superposition state, we have
\begin{align*}
&K_\text{t,s}=e^{-\frac{1}{4}\abs{\boldsymbol{\theta}_\text{s}-\boldsymbol{\theta}_\text{t}}^2}=e^{-\frac{1}{4}\abs{\Delta\boldsymbol{\theta}_\text{r,s}-\boldsymbol{\theta}_\text{r,t}}^2}\\
&=e^{-\frac{1}{4}(\abs{\Delta\boldsymbol{\theta}_\text{r,s}}^2+\abs{\boldsymbol{\theta}_\text{r,t}}^2-2\abs{\Delta\boldsymbol{\theta}_\text{r,s}}\abs{\boldsymbol{\theta}_\text{r,t}}\cos(\measuredangle(\Delta \boldsymbol{\theta}_{r,s},\Delta \boldsymbol{\theta}_{r,t})))}\\
&=K_\text{r,s}e^{-\frac{1}{4}(\abs{\boldsymbol{\theta}_\text{r,t}}^2-4\sqrt{-\log(K_\text{r,s})}\abs{\boldsymbol{\theta}_\text{r,t}}\cos(\measuredangle(\Delta \boldsymbol{\theta}_{r,s},\Delta \boldsymbol{\theta}_{r,t}))}\,,
\end{align*}
where $\measuredangle(\Delta \boldsymbol{\theta}_\text{r,s},\Delta \boldsymbol{\theta}_\text{r,t})$ is the angle between the two parameter vectors.
By rearranging the equation and taking the logarithm, we finally get
\begin{equation}\label{eq:cos_sup}
\cos[\measuredangle(\Delta \boldsymbol{\theta}_\text{r,s},\Delta \boldsymbol{\theta}_\text{r,t})]=\frac{4\log\left(\frac{K_\text{t,s}}{K_\text{r,s}}\right)+\abs{\Delta \boldsymbol{\theta}_\text{r,t}}^2}{4\abs{\Delta \boldsymbol{\theta}_\text{r,t}}\sqrt{-\log(K_\text{r,s})}}\,.
\end{equation}
A solution exists when the absolute value of the right hand side of~\eqref{eq:cos_sup} is less or equal 1.
The boundary of the solution space is given by
\begin{equation}
K_\text{t,s}=K_\text{r,s}\exp(\pm\abs{\Delta \boldsymbol{\theta}_\text{r,t}}\sqrt{-\log(K_\text{r,s})}-\frac{1}{4}\abs{\Delta \boldsymbol{\theta}_\text{r,t}}^2)
\end{equation}
 We define the error between desired and actual superposition state
\begin{equation}\label{eq:error_superposition}
\Delta C=\abs{K_\text{r,s}-K_\text{r,s}'}+\abs{K_\text{t,s}-K_\text{t,s}'}\,,
\end{equation}
where $K_\text{r,s}'$ and $K_\text{t,s}'$ are the actual fidelities measured with reference and target state respectively, and $\Delta C=0$ corresponds to perfect creation of the desired superposition state. 

Now, we investigate generating superposition states with the NPQC. The superposition state $\ket{\psi_\text{s}}$ is a linear combination of reference state $\ket{\psi(\boldsymbol{\theta}_\text{r})}$ and random target state $\ket{\psi(\boldsymbol{\theta}_\text{t})}$ with infidelity between target and reference state $\Delta K_\text{t}(\boldsymbol{\theta}_\text{r})$. We randomly choose a desired fidelity $K_\text{t,s}$ between target and superposition state, and fidelity $K_\text{r,s}$ between reference and superposition state. Then, we calculate the parameters $\boldsymbol{\theta}_\text{s}$ of the superposition state using~\eqref{eq:cos_sup} and generate the state using the NPQC.
In Fig.\ref{fig:superposition_sup}a, we plot the error of the superposition states $\Delta C$ for different $K_\text{t,s}$ and $K_\text{r,s}$. The dashed line shows the boundary of possible superposition states.
In Fig.\ref{fig:superposition_sup}b, we show $\cos[\measuredangle(\Delta \boldsymbol{\theta}_\text{r,s},\Delta \boldsymbol{\theta}_\text{r,t})]$ as a function of the fidelities $K_\text{r,s}$ and $K_\text{t,s}$.
In Fig.\ref{fig:superposition_sup}c, we find that the error $\Delta C$ decreases with number of parameters $M$ of the NPQC and increases with $\Delta K_\text{t}(\boldsymbol{\theta}_\text{r})$.}

\begin{figure*}[htbp]
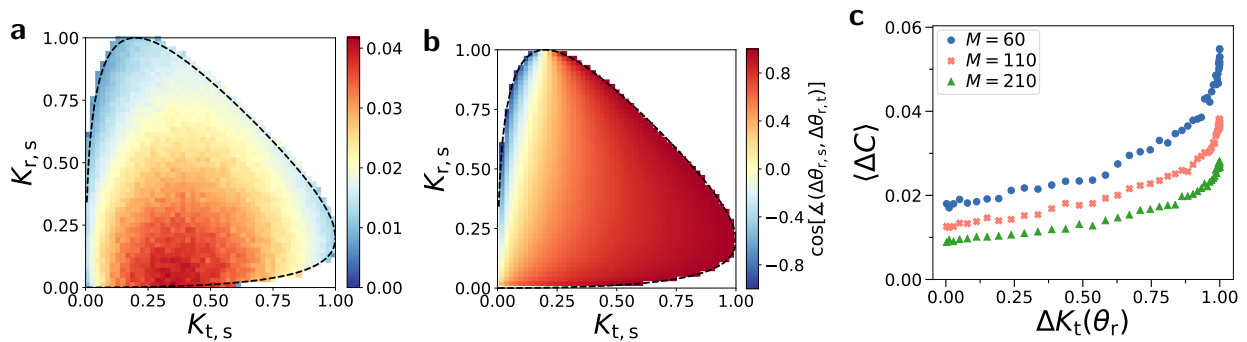

	\centering
	\subfigimg[width=0.3\textwidth]{a}{costEvalQGN10d20e9r1000000i1r9c7a9n0s0_02t9H7U0S1a0_8s0r0_8t0_5C12.pdf}
	\subfigimg[width=0.31\textwidth]{b}{angleEvalQGN10d20e9r1000000i1r9c7a9n0s0_02t9H7U0S1a0_8s0r0_8t0_5C12.pdf}
	\subfigimg[width=0.3\textwidth]{c}{deltaFSuperposQutipEvalQGN10d10e9.pdf}
	\caption{\idg{a} Average error $\langle\Delta C\rangle$ (\eqref{eq:error_superposition}) of generating superposition states  for $N=10$ qubits. We plot $\langle\Delta C\rangle$ as a function of the desired fidelity between target and superposition state $K_\text{t,s}$, as well as reference and superposition state $K_\text{r,s}$. Dashed line is the boundary of possible superposition states. The infidelity between reference and target state is $\Delta K_\text{t}(\boldsymbol{\theta}_\text{r})=0.8$ and the number of parameters of the NPQC is $M=110$.
	\idg{b} We show the relative angle between the difference vector of reference and superposition parameters, and reference and target parameters $\cos[\measuredangle(\Delta \boldsymbol{\theta}_\text{r,s},\Delta \boldsymbol{\theta}_\text{r,t})]$. 
	\idg{c} $\langle\Delta C\rangle$ as a function of infidelity $\Delta K_\text{t}(\boldsymbol{\theta}_\text{r})$ for varying $M$. $\langle\Delta C\rangle$ is averaged over 1000 random instances of fidelities $K_\text{t,s}$ and $K_\text{r,s}$.
	}
	\label{fig:superposition_sup}
\end{figure*}

\sectionMain{Parameter estimation}
\revA{As next application, we want to estimate the parameters of the NPQC by sampling from the quantum computer.
In particular, we want to estimate the entries of the $M$-dimensional vector $\Delta\boldsymbol{\theta}$ by performing measurements on the quantum state $\ket{\psi(\boldsymbol{\theta}_\text{r}+\Delta\boldsymbol{\theta})}$. Our task differs from standard quantum metrology, as we put two restrictions on the protocol. First, we only estimate the absolute values of each entries, i.e. $\vert\Delta\boldsymbol{\theta}_i\vert$, $i\in\{1,\dots,M\}$. Further, we assume that the magnitude of $\Delta\boldsymbol{\theta}$ is small. }

Using the NPQC, we can estimate these $M$ parameters by sampling in the computational basis only. The setup is a modified version of the NPQC $U_y(\boldsymbol{\theta})$, where all the parameterized single-qubit $z$-rotations are removed and we do not vary the $N/2$ $y$-rotations on the qubits with even index (see Appendix~\ref{sec:NPQC_constr_sup}). Then, we use $\ket{\psi_y(\boldsymbol{\theta})}=U_y^\dagger(\boldsymbol{\theta}_\text{r})U_y(\boldsymbol{\theta})\ket{0}$, where we apply the adjoint $U_y^\dagger(\boldsymbol{\theta}_\text{r})$.
A small variation $\abs{\Delta\boldsymbol{\theta}}\ll1$ yields
\begin{equation}\label{eq:sense_state}
\ket{\psi_y(\boldsymbol{\theta}_\text{r}+\Delta\boldsymbol{\theta})}\approx\sqrt{1-\frac{1}{4}\vert\Delta\boldsymbol{\theta}\vert^2}\ket{0}+\frac{1}{2}\sum_{i=1}^M (-1)^{\alpha_i}\Delta\boldsymbol{\theta}_i\ket{v_i}\,,
\end{equation}
where $\ket{v_i}$ is the computational basis state with the unique number $v_i$ for the $i$-th parameter of the NPQC and $\alpha_i\in \{0,1\}$. 
The approximate form of \eqref{eq:sense_state} is motivated in Appendix~\ref{sec:sensing_approx}.
The number $v_i$ can be efficiently determined on a classical computer from the gradients in respect to parameter $\Delta\boldsymbol{\theta}_i$ for the Clifford state $\ket{\psi_y(\boldsymbol{\theta}_\text{r})}$.
For small variations, the absolute value of the $i$-th parameter entry $\vert\Delta\boldsymbol{\theta}_i\vert$ can be estimated by sampling from $\ket{\psi_y(\boldsymbol{\theta}_\text{r}+\Delta\boldsymbol{\theta})}$ in the computational basis with $\vert\Delta\boldsymbol{\theta}_i\vert=2\sqrt{P_i}$, where $P_i=\abs{\braket{\psi_y(\boldsymbol{\theta}_\text{r}+\Delta\boldsymbol{\theta})}{v_i}}^2$ is the probability of measuring the computational basis state $\ket{v_i}$. As these measurements commute, one can determine $M=\frac{pN}{2}$ parameters at the same time for a NPQC with $p$ layers.

Now, we demonstrate our estimation protocol. In Fig.\ref{fig:sensing} we show the relative root mean square error (RMSE) to estimate the $M$-dimensional parameter vector $\Delta\boldsymbol{\theta}$ of the NPQC. We show in Fig.\ref{fig:sensing}a that the error decreases with increasing number of measurement samples $n$, reaching eventually a constant error. The error decreases when the parameter $\abs{\Delta\boldsymbol{\theta}}$ to be estimated becomes smaller as we derived our protocol in the limit of small $\abs{\Delta\boldsymbol{\theta}}$.
In Fig.\ref{fig:sensing}b, we show that for infinite number of measurements $n$ the error decreases to nearly zero with decreasing norm $\abs{\Delta\boldsymbol{\theta}}$ of the parameter vector to be estimated. For finite $n$, we observe that for small $\abs{\Delta\boldsymbol{\theta}}$ the error increases as the number of measurements is too low to reliably estimate the probability distribution of the computational basis states (see \eqref{eq:sense_state}). We observe that our protocol has a sweet spot where the relative error is minimal.

\begin{figure}[htbp]
	\centering
	\subfigimg[width=0.3\textwidth]{a}{relRMSEsampledQutipEvalQGN8d6e9.pdf}\hfill
	\subfigimg[width=0.3\textwidth]{b}{relRMSEbothQutipEvalQGN8d6e9r20i1.pdf}
	\caption{\idg{a} Estimating parameters $\Delta\theta$ by sampling from NPQC $\ket{\psi_y(\boldsymbol{\theta}_\text{r}+\Delta\boldsymbol{\theta})}$. We plot the root mean square error $\text{RMSE}(\Delta\boldsymbol{\theta}')=\sqrt{\langle(\Delta\boldsymbol{\theta}'-\Delta\boldsymbol{\theta})^2\rangle}$ of the estimated parameter $\Delta\boldsymbol{\theta}'$ normalized by the average parameter $\langle\Delta\boldsymbol{\theta}\rangle$ as a function of the number of measurement samples $n$. We compare different numbers of parameters $M$ and norms of parameter vector $\abs{\Delta\boldsymbol{\theta}}$. Data is averaged over 10 random instances of parameter $\Delta\boldsymbol{\theta}$ for $N=8$.
	\idg{b} Estimation error as a function of norm of parameter vector $\abs{\Delta\boldsymbol{\theta}}$ for different $M$ and $n$.
	}
	\label{fig:sensing}
\end{figure}

\sectionMain{Potential for metrology}
\revA{We now consider the theoretical potential of the NPQC for general quantum metrology tasks, beyond the restricted protocol we proposed. In particular, we derive the lower bounds of any possible quantum  metrology protocol with the NPQC. 
The accuracy of estimating $\Delta\boldsymbol{\theta}$ as measured by the mean squared error $\text{MSE}(\hat{\boldsymbol{\theta}})=\mathbb{E}(\vert\hat{\boldsymbol{\theta}}-\boldsymbol{\theta}\vert^2)=\text{Tr}[\text{cov}(\hat{\boldsymbol{\theta}})]$ is fundamentally limited by quantum mechanics. For any quantum metrology protocol with $n$ measurements and an unbiased estimator (i.e. $\mathbb{E}(\hat{\boldsymbol{\theta}})=\boldsymbol{\theta}$), the $\text{MSE}(\hat{\boldsymbol{\theta}})\ge\frac{1}{n}\mathcal{F}^{-1}(\boldsymbol{\theta})$ is lower bounded by the inverse of the QFIM, which is called the quantum Cramér-Rao bound~\cite{helstrom1976quantum,liu2019quantum,meyer2021fisher}. 
The NPQC with $\text{Tr}[\mathcal{F}^{-1}(\boldsymbol{\theta}_\text{r})]=M$ has the smallest possible quantum Cramér-Rao bound $Q_\text{min}=M$ for a general class of PQCs constructed from parameterized Pauli rotations and arbitrary unitaries (see Appendix~\ref{sec:QFIM_sup} for derivation)
\begin{equation}\label{eq:MSE}
\text{MSE}(\hat{\boldsymbol{\theta}})\vert_{\boldsymbol{\theta}=\boldsymbol{\theta}_\text{r}}\ge \frac{1}{n}\text{Tr}[\mathcal{F}^{-1}(\boldsymbol{\theta}_\text{r})]=\frac{Q_\text{min}}{n}=\frac{M}{n}\,.
\end{equation}
This can be intuitively understood as for a euclidean QFIM, any variation of the parameters leads to an orthogonal change in the space of quantum states. Thus, each parameter direction is associated with an orthogonal quantum state that can in principle be distinguished from the other states. }

\sectionMain{Discussion}
We introduced the NPQC which features a euclidean quantum geometry with QFIM $\mathcal{F}(\boldsymbol{\theta}_\text{r})=I$ close to the reference parameter $\boldsymbol{\theta}_\text{r}$.
The reference state $\ket{\psi(\boldsymbol{\theta}_\text{r})}$ is completely general and can be any arbitrary quantum state while retaining its euclidean quantum geometry.
The NPQC for $M$ parameters requires only single qubit rotations and $(M-2N)/2$ CPHASE gates, which is a low resource requirement per parameter, comparable with other hardware efficient circuits~\cite{sim2019expressibility}. 

\revB{The NPQC can have barren plateaus with exponentially vanishing gradients, however we find that the decrease with qubit number and depth is much slower compared to other hardware efficient circuits. Note that we compared our results only to the YZ-CNOT circuit, but other commonly used hardware efficient circuits are known to show the same features~\cite{mcclean2018barren,haug2021capacity}. 
Further, the NPQC has an exponentially large effective dimension $D_\text{C}=\text{rank}(\mathcal{F})$~\cite{haug2021capacity} and thus can explore exponentially many directions in Hilbertspace. Yet, its expressibility as measured by the frame potential $F_2$ is lower compared to other hardware efficient circuits, which implies that the NPQC samples states non-uniformly within the Hilbertspace. As high expressibility is linked to the variance of the gradient~\cite{holmes2021connecting}, this may explain why the NPQC has comparatively large gradients.
The differences between NPQC and other hardware efficient circuits may be result of the special structure of the NPQC. In the NPQC, most parameterized rotations act only on odd numbered qubits, and entangling gates only between even and odd numbered qubits. Note that the larger variance of gradients can be advantageous for VQAs, as it allows for training even for relatively deep circuits and higher qubit number compared to other circuits.
We believe other types of NPQCs could be found which have higher expressibility.}

\revA{For VQAs, for the first training step the gradient is equivalent to the QNG, which is known to speed up training~\cite{stokes2020quantum}. While normally the QNG has to be computed, we gain the QNG for free and we can use adaptive learning rates~\cite{haug2021optimal}. We apply our methods to learn quantum states using the fidelity as a cost function.
For the first training iteration, the gradient is exactly equivalent to the QNG. We find that the infidelity is reduced strongly, with better performance for increasing number of qubits. 
As the QFIM is exactly the identity only for the starting point, for further training iterations the gradient differs from the QNG. Still, we find improved performance even for further training steps, as in the vicinity the gradient is still close to the QNG.  This leads to faster training during the first three training steps, and we find better performance compared to alternative training methods.}

We believe our method can also yield speedups for other types of cost functions such as energy. 
The NPQC could also improve the runtime of variational quantum simulation algorithms that require knowledge of the QFIM~\cite{li2017efficient,yuan2019theory,yao2020adaptive}. When studying the short-time dynamics, which is close to the initial state, the QFIM is approximately the identity and we can remove the resource-heavy measurement of the QFIM from these algorithms~\cite{van2020measurement}.

\revA{We provided a protocol to estimate the absolute values of $M$ parameter entries $\Delta\boldsymbol{\theta}$ by sampling in the computational basis. We derived our protocol by employing a first order approximation of the parameter $\Delta \boldsymbol{\theta}$. 
Our protocol becomes more accurate for small $\Delta \boldsymbol{\theta}$, which one could improve further by deriving higher order terms of the expansion.
The sampling can be easily done on NISQ devices and trivially commutes, which allows us to estimate all parameters in parallel. 
Our parameter estimation protocol could be immediately applied in atomic~\cite{bernien2017probing,zhang2017observation} or superconducting setups~\cite{arute2019quantum}. 
One could use our protocol to determine calibration errors in parameterized quantum gates~\cite{cerfontaine2020self}. As advantage, our protocol can measure all parameters at the same time for faster calibration of devices. Future work could study the robustness of multi-parameter estimation against noise in NISQ devices.

For general quantum metrology protocols beyond the restrictions of aforementioned protocol, the accuracy is limited by the quantum Cramér-Rao bound. We showed that for a general class of circuits, NPQCs have the minimal quantum Cramér-Rao bound~\cite{helstrom1976quantum}. This shows the potential of NPQCs for quantum sensing protocols involving many parameters. Future work could find sensing protocols with NISQ-friendly measurement settings that are robust against noise. 
An open question remains whether a protocol that saturates the quantum Cramér-Rao bound exists~\cite{liu2019quantum,meyer2021fisher}.

As further application, the special QFIM of the NPQC allows us to generate arbitrary superposition states of two states. For a desired superposition amplitude, we can compute the corresponding parameters of the NPQC and prepare the state. This scheme could be useful in various state preparation tasks for NISQ computers.}

Finally, we note that a core component of machine learning is information geometry~\cite{abbas2020power,liang2019fisher}. In quantum machine learning based on kernels, the QFIM describes the number of independent features the kernel can represent~\cite{haug2021largescale}. Thus, the NPQC with its special QFIM could be useful for quantum machine learning tasks.

Python code for the numerical calculations are available on Github~\cite{haug2021pqc}.

 \let\oldaddcontentsline\addcontentsline
\renewcommand{\addcontentsline}[3]{}

\medskip
\begin{acknowledgements}
{\noindent {\em Acknowledgements---}} We acknowledge discussions with Kiran Khosla, Christopher Self and Alistair Smith. This work is supported by a Samsung GRC project and the UK Hub in Quantum Computing and Simulation, part of the UK National Quantum Technologies Programme with funding from UKRI EPSRC grant EP/T001062/1. 
\end{acknowledgements}
\bibliography{NaturalCircuit}

\let\addcontentsline\oldaddcontentsline

\clearpage
\appendix

\tableofcontents

\section{Fidelity and variance of NPQC}
We define the fidelity $K_\text{t}(\boldsymbol{\theta})=\abs{\braket{\psi_\text{t}}{\psi(\boldsymbol{\theta})}}^2$ of quantum state $\ket{\psi(\boldsymbol{\theta})}$ in respect to the target state $\ket{\psi(\boldsymbol{\theta}_\text{t})}$.
For small enough parameter distances $\Delta \boldsymbol{\theta}=\boldsymbol{\theta}-\boldsymbol{\theta}'$ the fidelity is approximately described by  a Gaussian~\cite{haug2021optimal}
\begin{equation}\label{eq:kernel_sup}
\mathcal{K}(\boldsymbol{\theta},\boldsymbol{\theta}')=\abs{\braket{\psi(\boldsymbol{\theta})}{\psi(\boldsymbol{\theta}')}}^2\approx\text{exp}[-\frac{1}{4}\Delta \boldsymbol{\theta}^{\text{T}}\mathcal{F}(\boldsymbol{\theta})\Delta \boldsymbol{\theta}]\,.
\end{equation}
Following~\cite{haug2021optimal}, the variance of the gradient can be approximated by
\begin{align*}
&\text{var}(\partial_k K_\text{t}(\boldsymbol{\theta}))=\langle\langle (\partial_k K_\text{t}(\boldsymbol{\theta}))^2\rangle_{\Delta\boldsymbol{\theta}}\rangle_k - \langle\langle \partial_k K_\text{t}(\boldsymbol{\theta})\rangle_{\boldsymbol{\theta}}\rangle_k^2\\
&\approx\frac{1}{M}\frac{\text{Tr}(\mathcal{F}(\boldsymbol{\theta})^2)}{\text{Tr}(\mathcal{F}(\boldsymbol{\theta}))} K_\text{t}(\boldsymbol{\theta})^2\log\left[\frac{ K_0}{K_\text{t}(\boldsymbol{\theta})}\right]\,,\numberthis\label{eq:var_grad_sup}
\end{align*}
where the average is first taken over distance  $\Delta\boldsymbol{\theta}=\boldsymbol{\theta}-\boldsymbol{\theta}_\text{t}$ and then over the gradient indices $k$. We define $K_0=\text{max}_{\boldsymbol{\theta}}\abs{\braket{\psi(\boldsymbol{\theta})}{\psi_t}}^2$ as the maximal possible fidelity for the NPQC. 
For parameter $\boldsymbol{\theta}_\text{r}$, the QFIM is given by $\mathcal{F}(\boldsymbol{\theta}_\text{r})=I$, resulting in simple expressions for fidelity and variance of the gradient respectively.

We now show numerical evidence for the Gaussian form of the fidelity of NPQCs. We show the fidelity as a function of distance between the reference parameter $\boldsymbol{\theta}_\text{r}$ and arbitrarily chosen target parameters $\boldsymbol{\theta}_\text{t}$  $\abs{\Delta\boldsymbol{\theta}_\text{r,t}}^2=\abs{\boldsymbol{\theta}_\text{r}-\boldsymbol{\theta}_\text{t}}^2$ in Fig.\ref{fig:further_data}a. We observe that the data is fitted well with \eqref{eq:kernel_sup} for small distances. For larger distances, it becomes constant and reaches the fidelity $\frac{1}{2^N}$ of Haar random quantum states.
The variance of gradient is shown in Fig.\ref{fig:further_data}b against $\abs{\Delta\boldsymbol{\theta}_\text{r,t}}$. We indeed find a good fit with \eqref{eq:var_grad_sup}. We find that the accuracy of the formulas improve with increasing number of qubits.
\begin{figure}[htbp]
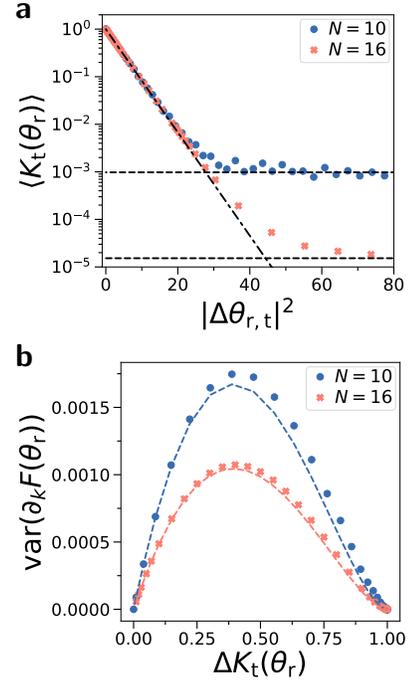

	\centering	
	\subfigimg[width=0.3\textwidth]{a}{deltaFstartQutipEvalQGN10d20e9.pdf}\hfill
	\subfigimg[width=0.3\textwidth]{b}{vargradQutipEvalQGN10d20e9r50.pdf}
	\caption{\idg{a} Fidelity $K_\text{t}(\boldsymbol{\theta}_\text{r})$ before optimization as a function of distance between reference and target parameters $\abs{\Delta\boldsymbol{\theta}_\text{r,t}}^2$ for number of layers $p=10$. Dash-dotted line is the theoretic fidelity (\eqref{eq:kernel_sup}). Dashed horizontal lines indicate fidelity of a Haar random state $F(\boldsymbol{\theta}_\text{rand})=\frac{1}{2^N}$.
	\idg{b} Variance of gradient $\text{var}(\partial_kK_\text{t}(\boldsymbol{\theta}_\text{r}))$ as a function of $\abs{\Delta\boldsymbol{\theta}_\text{r,t}}$. Dashed lines are the analytic equations for the variance (\eqref{eq:var_grad_sup}). 
	}
	\label{fig:further_data}
\end{figure}

\section{Gradient update}\label{sec:gradient_sup}
The optimal adaptive learning rates for gradient ascent~\cite{haug2021optimal} can be derived by using that the fidelity is approximately Gaussian for PQCs. These adaptive learning rates tremendously speed up the gradient ascent algorithm.

The goal is to optimise $\boldsymbol{\theta}_\text{t}=\text{argmax}_{\boldsymbol{\theta}}\abs{\braket{\psi(\boldsymbol{\theta})}{\psi_\text{t}}}^2$ for a given target state $\ket{\psi_\text{t}}$.
Initially, we assume $\text{max}_{\boldsymbol{\theta}}\abs{\braket{\psi(\boldsymbol{\theta})}{\psi_\text{t}}}^2=K_0=1$, which means that the PQC is able to represent the state. We relax this condition $K_0<1$ further below.
For an initial parameter $\boldsymbol{\theta}$ we get
a fidelity $K_\text{t}(\boldsymbol{\theta})$.
The gradient ascent algorithm has the update rule for the new parameter $\boldsymbol{\theta}_1$
\begin{equation}\label{eq:grad_update}
\boldsymbol{\theta}_1=\boldsymbol{\theta}+\alpha_1 \nabla K_\text{t}(\boldsymbol{\theta})\,,
\end{equation}
with the learning rate $\alpha_1$.
As we show in \eqref{eq:kernel_sup}, the fidelity follows a Gaussian kernel. We use this to choose $\alpha_1$ such that it is as close as possible to the optimal solution $\boldsymbol{\theta}_1\approx\boldsymbol{\theta}_\text{t}$.
We have
\begin{equation}
K_\text{t}(\boldsymbol{\theta})=e^{-\frac{1}{4}\Delta\boldsymbol{\theta}^{\text{T}}\mathcal{F}(\boldsymbol{\theta})\Delta\boldsymbol{\theta}}\,.
\end{equation}
where we define the distance between target parameter and initial parameter $\Delta \boldsymbol{\theta}=\boldsymbol{\theta}_\text{t}-\boldsymbol{\theta}$. By applying the logarithm we get
\begin{equation}\label{eq:log_kernel}
-4\log(K_\text{t}(\boldsymbol{\theta}))=\Delta\boldsymbol{\theta}^{\text{T}}\mathcal{F}(\boldsymbol{\theta})\Delta\boldsymbol{\theta}\,.
\end{equation}
Reordering \eqref{eq:grad_update} yields
\begin{equation}
\Delta\boldsymbol{\theta}=\alpha_1 \nabla K_\text{t}(\boldsymbol{\theta})\,
\end{equation}
Then, we multiply both sides with $\mathcal{F}^{\frac{1}{2}}(\boldsymbol{\theta})$
\begin{equation}
\mathcal{F}^{\frac{1}{2}}(\boldsymbol{\theta})\Delta\boldsymbol{\theta}=\alpha_1\mathcal{F}^{\frac{1}{2}}(\boldsymbol{\theta}) \nabla K_\text{t}(\boldsymbol{\theta})\,,
\end{equation}
followed by taking square
\begin{equation}
\Delta\boldsymbol{\theta}^{\text{T}}\mathcal{F}(\boldsymbol{\theta})\Delta\boldsymbol{\theta}=\alpha_1^2 \nabla K_\text{t}(\boldsymbol{\theta})^{\text{T}}\mathcal{F}(\boldsymbol{\theta})\nabla K_\text{t}(\boldsymbol{\theta})\,.
\end{equation}
Here, we used $\vert\mathcal{F}^{\frac{1}{2}}\boldsymbol{\mu}\vert^2=\boldsymbol{\mu}^{\text{T}} \mathcal{F}\boldsymbol{\mu}$.
We insert \eqref{eq:log_kernel} and yield
\begin{equation}
\alpha_1=\frac{2\sqrt{-\log(K_\text{t}(\boldsymbol{\theta}))}}{\sqrt{\nabla K_\text{t}(\boldsymbol{\theta})^{\text{T}}\mathcal{F}(\boldsymbol{\theta}) \nabla K_\text{t}(\boldsymbol{\theta})}}\,,
\end{equation}
with the initial update rule
\begin{equation}
\boldsymbol{\theta}_1=\boldsymbol{\theta}+\alpha_1 \nabla K_\text{t}(\boldsymbol{\theta})\,.
\end{equation}
Note we assumed that the PQC is able to represent the target quantum state perfectly $\text{max}_{\boldsymbol{\theta}}\abs{\braket{\psi(\boldsymbol{\theta})}{\psi_\text{t}}}^2=1$. 
We now loosen this restriction.

The target state is defined as
\begin{equation}
\ket{\psi_\text{t}}=\sqrt{K_0}\ket{\psi(\boldsymbol{\theta}_\text{t})}+\sqrt{1-K_0}\ket{\psi_\text{o}}\,,
\end{equation}
where $\ket{\psi_\text{o}}$ is a state orthogonal to any other state that can be represented by the PQC, i.e. $\abs{\braket{\psi_\text{o}}{\psi(\boldsymbol{\theta})}}^2=0\,\,\forall\boldsymbol{\theta}$ and $K_0$ is the maximal fidelity for the target state possible with the PQC.
Then, we find that the initial update rule as defined above is moving in the correct direction, however it overshoots the target parameters. We take this into account via
\begin{equation}
K_\text{t}(\boldsymbol{\theta})=K_0e^{-\frac{1}{4}\Delta\boldsymbol{\theta}^{\text{T}}\mathcal{F}(\boldsymbol{\theta})\Delta\boldsymbol{\theta}}
\end{equation}
\begin{equation}
K_\text{t}(\boldsymbol{\theta}_1)=K_0e^{-\frac{1}{4}(\boldsymbol{\theta}_1-\boldsymbol{\theta}_\text{t})^{\text{T}}\mathcal{F}(\boldsymbol{\theta})(\boldsymbol{\theta}_1-\boldsymbol{\theta}_\text{t})}
\end{equation}
where $K_\text{t}(\boldsymbol{\theta}_1)$ is the fidelity after applying the initial update rule.
The corrected update rule takes the form
\begin{equation}
\boldsymbol{\theta}_\text{t}=\boldsymbol{\theta}+\alpha_\text{t}\nabla K_\text{t}(\boldsymbol{\theta})\,,
\end{equation}
with final learning rate $\alpha_\text{t}$. By subtracting our two update rules we yield
\begin{equation}
\boldsymbol{\theta}_1-\boldsymbol{\theta}_\text{t}=(\alpha_1-\alpha_\text{t})\nabla K_\text{t}(\boldsymbol{\theta})\,.
\end{equation}
We insert above equations into our fidelities
\begin{equation}
K_\text{t}(\boldsymbol{\theta})=K_0 e^{-\frac{1}{4}\alpha_\text{t}^2 \nabla K_\text{t}^{\text{T}}\mathcal{F}\nabla K_\text{t}}
\end{equation}
\begin{equation}
K_\text{t}(\boldsymbol{\theta}_1)=K_0 e^{-\frac{1}{4}(\alpha_1-\alpha_\text{t})^2\nabla K_\text{t}^{\text{T}}\mathcal{F}\nabla K_\text{t}}
\end{equation}
We then divide above equations and solve for $\alpha_\text{t}$
\begin{equation}\label{eq:update_add_sup}
\alpha_\text{t}=\frac{1}{2}\left(\frac{4}{\alpha_1 \nabla K_\text{t}(\boldsymbol{\theta})^{\text{T}}\mathcal{F}(\boldsymbol{\theta})\nabla K_\text{t}(\boldsymbol{\theta})}\log\left(\frac{K_\text{t}(\boldsymbol{\theta}_1)}{K_\text{t}(\boldsymbol{\theta})}\right)+\alpha_1\right)
\end{equation}
with the final update rule
\begin{equation}
\boldsymbol{\theta}_\text{t}'=\boldsymbol{\theta}+\alpha_\text{t}\nabla K_\text{t}(\boldsymbol{\theta})
\end{equation}
where $\boldsymbol{\theta}_\text{t}'$ is the parameter for the PQC after one iteration of gradient ascent with our final update rule.

Note that the adaptive training step requires in general knowledge of the QFIM $\mathcal{F}$. However, for the NPQC at $\boldsymbol{\theta}_\text{r}$, we know that its QFIM $\mathcal{F}(\boldsymbol{\theta}_\text{r})=I$, where $I$ is the identity matrix. By inserting $\mathcal{F}(\boldsymbol{\theta}_\text{r})=I$ into above equations, we get the adaptive learning rates
\begin{align*}
\boldsymbol{\theta}_1=&\boldsymbol{\theta}+\alpha_1 \nabla K_\text{t},\hspace{0.4cm}\alpha_1=\frac{2\sqrt{-\log(K_\text{t}(\boldsymbol{\theta}))}}{\abs{\nabla K_\text{t}(\boldsymbol{\theta})}}\\
\alpha_\text{t}(\boldsymbol{\theta})=&\frac{2}{\alpha_1 \abs{\nabla K_\text{t}(\boldsymbol{\theta})}^2}\log\left(\frac{K_\text{t}(\boldsymbol{\theta}_1)}{K_\text{t}(\boldsymbol{\theta})}\right)+\frac{\alpha_1}{2}\,.\numberthis \label{eq:update_add_euclid_sup}
\end{align*}
Note that this simplification is only valid close to the reference parameter $\boldsymbol{\theta}_\text{r}$. After training for multiple iterations, the parameter $\boldsymbol{\theta}$ will differ from the reference parameter. At this point, the QFIM will become sufficiently non-euclidean such that we cannot assume above update rules anymore. Then, one has to either switch to a heuristic learning rate or calculate the QFIM. We find numerically that it is best to switch after 3 training iterations.

\section{Further training data}\label{sec:training_sub}
Here, we provdie further results on training VQAs with the NPQC.
First, we discuss training as function of number of layers $p$. In Fig.\ref{fig:depth},  we show the infidelity $\langle\Delta K_\text{t}(\boldsymbol{\theta}_\text{t}')\rangle$  after the first training step of adaptive gradient ascent for the NPQC with initial parameter $\boldsymbol{\theta}=\boldsymbol{\theta}_\text{r}$ as a function of number of layers $p$.
We first observe an increase of $\langle\Delta K_\text{t}(\boldsymbol{\theta}_\text{t}')\rangle$ with number of layers $p$, then it reaches a nearly constant level around $p\approx5$ for any $N$ we investigated. Further increase of $p$ yields either a further relatively smaller increase (for $\Delta K_\text{t}(\boldsymbol{\theta}_\text{r})>\xi(N)$) or decrease in $\Delta K_\text{t}(\boldsymbol{\theta}_\text{t}')$ (for $\Delta K_\text{t}(\boldsymbol{\theta}_\text{r})<\xi(N)$), where we numerically find $\xi(10)\approx 0.5$ and $\xi(16)\approx 0.9$.

\begin{figure*}[htbp]
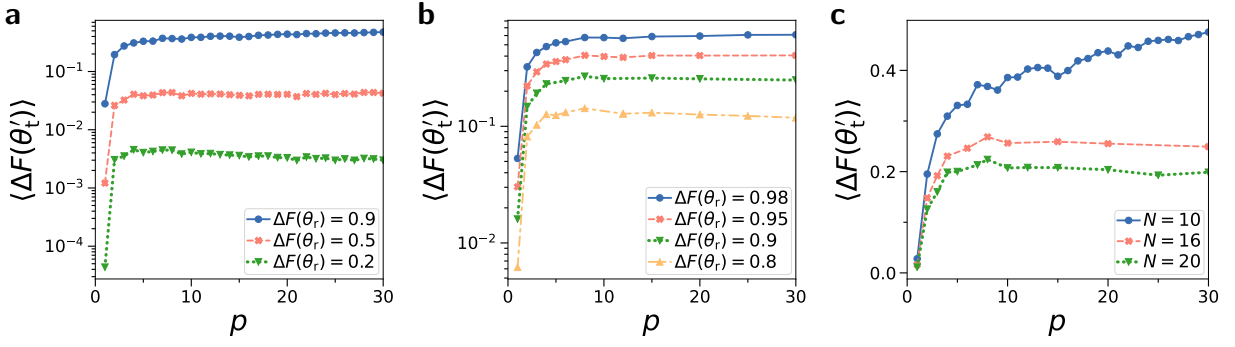

	\centering
	\subfigimg[width=0.3\textwidth]{a}{deltaFQutipEvalEvalQGN10d64e9.pdf}
	\subfigimg[width=0.3\textwidth]{b}{deltaFJuliaEvalresQNGN16D1C28o28p28r50c28R1S0a0_05i3q0a0_98s0_0.pdf}
	\subfigimg[width=0.3\textwidth]{c}{deltaFNDepthQutipEvalQGN10d64e9.pdf}
	\caption{\idg{a} We show average infidelity after optimization $\langle\Delta K_\text{t}(\boldsymbol{\theta}_\text{t}')\rangle$ averaged over 50 random instances against number of layers $p$ for $N=10$ qubits.
	\idg{b} We show infidelity for different initial infidelities $\Delta K_\text{t}(\boldsymbol{\theta}_\text{r})$ for $N=16$ qubits.
	\idg{c} We show infidelity for different number of qubits $N$ for initial infidelity $\Delta K_\text{t}(\boldsymbol{\theta}_\text{r})=0.9$.
	}
	\label{fig:depth}
\end{figure*}

In Fig.\ref{fig:further_training} we show how training depends on the learning rate and initial infidelity. In Fig.\ref{fig:further_training}a, we discuss the infidelity as a function of the learning rate $\lambda$ of gradient ascent. We find that the adaptive learning rate $\alpha_\text{t}$ (\eqref{eq:update_add_sup}) describes the best possible choice of learning rate.
In Fig.\ref{fig:further_training}b, we show the training starting with $\boldsymbol{\theta}_\text{r}$ for a target state with various initial infidelities $\Delta K_\text{t}(\boldsymbol{\theta}_\text{r})$. 
\begin{figure}[htbp]
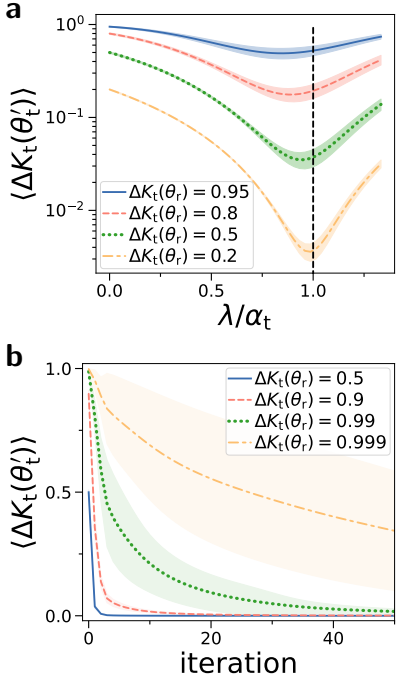

	\centering
	\subfigimg[width=0.3\textwidth]{a}{scaleMeanQutipEvalQGN10d20e9r50i1.pdf}\hfill
	\subfigimg[width=0.3\textwidth]{b}{trainStdQutipEvalQGN10d20e9r50a0_5.pdf}
	\caption{
	\idg{a} Average infidelity after a gradient ascent step $\langle\Delta K_\text{t}(\boldsymbol{\theta}_\text{t}')\rangle$ plotted against learning rate $\lambda$ for single step of gradient ascent. $\lambda$ is normalized in respect to analytically calculated learning rate $\alpha_\text{t}$ (\eqref{eq:update_add_sup}), shown as vertical dashed line. Curves show various infidelity before optimization $\Delta K_\text{t}(\boldsymbol{\theta}_\text{r})$, with shaded area being the standard deviation of $\Delta K_\text{t}(\boldsymbol{\theta}_\text{t}')$. Initial state is $\ket{\psi(\boldsymbol{\theta}_\text{r})}$. The number of qubits is $N=10$ and number of layers $p=10$.
	\idg{b} We show training with NPQC starting at $\boldsymbol{\theta}_\text{r}$ for different initial infidelities $\Delta K_\text{t}(\boldsymbol{\theta}_\text{r})$. We use adaptive learning rate for first three iterations, then use fixed learning rate $\alpha=0.5$.
	}
	\label{fig:further_training}
\end{figure}

\section{NPQC construction for parameter estimation}\label{sec:NPQC_constr_sup}
In Fig.\ref{fig:circuit_sensing}, we show the modified NPQC $U_y(\boldsymbol{\theta})\ket{0}$ used for multi-parameter sensing. Compared to the original NPQC, we remove the $z$ rotations and fix the $y$ rotations on the qubits with even index to $\pi/2$. 
The final quantum state used for sensing is then given by $\ket{\psi_y(\boldsymbol{\theta})}=U_y^\dagger(\boldsymbol{\theta}_\text{r})U_y(\boldsymbol{\theta})\ket{0}$, where we added the adjoint of the modified NPQC with fixed parameter $\boldsymbol{\theta}_\text{r}$.
\begin{figure}[htbp]
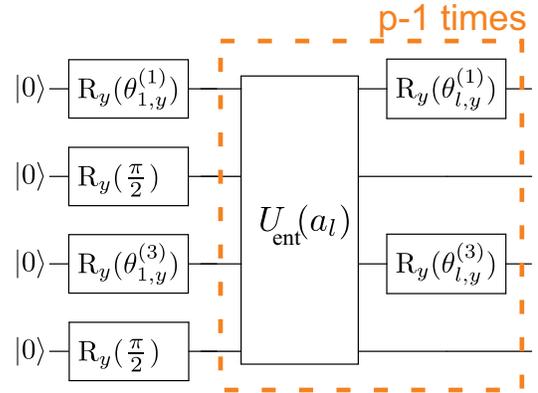

	\centering
	\subfigimg[width=0.4\textwidth]{}{NaturalPQCSensing.pdf}
	\caption{Modified NPQC $U_y(\boldsymbol{\theta})\ket{0}$ for multi-parameter sensing. $U_\text{ent}(a_l)$ is defined in the main text.
	}
	\label{fig:circuit_sensing}
\end{figure}

\section{NPQC approximation for parameter estimation}\label{sec:sensing_approx}
\revB{We now motivate that the NPQC modified for parameter estimation can be approximated with \eqref{eq:sense_state}.

First, we vary $\ket{\psi_y(\boldsymbol{\theta}_\text{r}+\beta\boldsymbol{e}_i)}$ by a factor $\beta$ along the unit vector $\boldsymbol{e}_i$ for a single parameter entry $i$, while keeping all other parameters constant. 
We rewrite the unitary $U_y(\boldsymbol{\theta}_\text{r}+\beta\boldsymbol{e}_i)=U_\text{B}\exp(-i\beta\sigma^y_{g(i)})U_\text{A}$, where $U_\text{A}$, $U_\text{B}$ includes all gates before and after the rotation for parameter $i$, and $g(i)$ is the index of the qubit for the rotation corresponding to the $i$th parameter. The derivative is given by
$\partial_i U_y(\boldsymbol{\theta}_\text{r}+\beta\boldsymbol{e}_i)=U_\text{B}(-i\sigma^y_{g(i)})\exp(-i\beta\sigma^y_{g(i)})U_\text{A}$.

For infinitesimal small variations $\beta$, we can write 
\begin{align*}
&\frac{1}{\beta}(\ket{\psi_y(\boldsymbol{\theta}_\text{r}+\beta\boldsymbol{e}_i)}-\ket{\psi_y(\boldsymbol{\theta}_\text{r})})\big\vert_{\beta\rightarrow 0}=\\
&\partial_i\ket{\psi_y(\boldsymbol{\theta})}\big\vert_{\boldsymbol{\theta}=\boldsymbol{\theta}_\text{r}}=U_y^\dagger(\boldsymbol{\theta}_\text{r})\partial_i U_y(\boldsymbol{\theta})\ket{0}=\\
&U_\text{A}^\dagger\exp(i\beta\sigma^y_{g(i)})U_\text{B}^\dagger U_\text{B}(-i\sigma^y_{g(i)})\exp(-i\beta\sigma^y_{g(i)})U_\text{A}\ket{0}=\\
&U_\text{A}^\dagger(-i\sigma^y_{g(i)})U_\text{A}\ket{0}\,,
\end{align*}
where $U_\text{A}$, $U_\text{B}$ are Clifford unitaries as they are only composed of Clifford gates. Any Clifford unitary $U_\text{A}$ maps a Pauli string $P$ to another Pauli string $P'=U_\text{A}^\dagger P U_\text{A}$. Further, $U_\text{A}$ is a real unitary and thus the resulting state must be a real quantum state. A Pauli string applied to the computational basis state $\ket{0}$ yields another computational basis state $P'\ket{0}=(-1)^{\alpha_i}\ket{v_i}$ with basis $v_i$ and some phase factor $(-1)^{\alpha_i}$, $\alpha_i\in\{0,1\}$. Note that in our case $P'$ and thus the phase factors are real valued.  As such, we find \begin{equation}
\partial_i\ket{\psi_y(\boldsymbol{\theta})}=\frac{1}{\beta}(\ket{\psi_y(\boldsymbol{\theta}_\text{r}+\beta\boldsymbol{e}_i)}-\ket{\psi_y(\boldsymbol{\theta}_\text{r})})\big\vert_{\beta\rightarrow 0}=(-1)^{\alpha_i}\ket{v_i}\,.
\end{equation}
For small $\beta$, we can write the state without normalization
\begin{equation}\label{eq:variation}
\ket{\psi_y(\boldsymbol{\theta}_\text{r}+\beta\boldsymbol{e}_i)}\sim\ket{\psi_y(\boldsymbol{\theta}_\text{r})}+\beta\partial_i\ket{\psi_y(\boldsymbol{\theta})}\sim\ket{0}+(-1)^{\alpha_i}\beta\ket{v_i}\,.
\end{equation}

Now, we are varying not only one, but the $M$-dimensional parameter $\Delta\boldsymbol{\theta}$ with $\ket{\psi_y(\boldsymbol{\theta}_\text{r}+\Delta\boldsymbol{\theta})}$. As $\mathcal{F}(\boldsymbol{\theta}_\text{r})=I$, this implies that for small $\Delta\boldsymbol{\theta}$ all variations can be treated independent of each other. Further, we know that the fidelity for small variations follows \eqref{eq:fidelityQFIM}. Combining these two conditions and \eqref{eq:variation}, we find that the state for $\vert\Delta\boldsymbol{\theta}\vert\ll 1$ is given by
\begin{equation}\label{eq:sense_state_sup}
\ket{\psi_y(\boldsymbol{\theta}_\text{r}+\Delta\boldsymbol{\theta})}\approx\sqrt{1-\frac{1}{4}\vert\Delta\boldsymbol{\theta}\vert^2}\ket{0}+\frac{1}{2}\sum_{i=1}^M (-1)^{\alpha_i}\Delta\boldsymbol{\theta}_i\ket{v_i}\,.
\end{equation}
}

\section{Cramér-Rao bound of NPQC}\label{sec:QFIM_sup}
We assume a general class of PQCs composed of arbitrary unitaries and Pauli rotations. We have $\ket{\psi(\boldsymbol{\theta})}=U(\boldsymbol{\theta})\ket{0}=\prod_{l=1}^M U_l(\theta_l)\ket{0}$ given by $M$ layers and $M$-dimensional parameter vector $\boldsymbol{\theta}$. The unitary at layer $l$ is given by $U_l(\theta_l)=R_l(\theta_l)W_l$, with constant $N$-qubit unitary $W_l$ and a parameterized unitary $R_l(\theta_l)=\exp(-i\frac{\theta_l}{2}P_l)$, with parameter $\theta_l$ and Pauli string $P_l=\otimes_{k=1}^N \boldsymbol{\sigma}$, where $\boldsymbol{\sigma}\in\{\sigma^x,\sigma^y,\sigma^z,I\}$ is either a Pauli matrix or the identity. The NPQC belongs to this class of PQC as well as commonly used hardware efficient PQCs.

The quantum Fisher information metric $\mathcal{F}$ is a $M$ dimensional positive semidefinite matrix given by $\mathcal{F}_ {ij}(\boldsymbol{\theta})=4[\braket{\partial_i \psi}{\partial_j \psi}-\braket{\partial_i \psi}{\psi}\braket{\psi}{\partial_j \psi}]$, where $\partial_j\ket{\psi}$ is the gradient in respect to parameter $j$.

The quantum Fisher information metric (via the quantum Cramér-Rao bound) sets a lower bound on the error made when using a quantum system as a sensor~\cite{meyer2021fisher}.
\begin{theorem}\label{theorem:NPQC}
For above defined class of PQCs, the minimal possible mean squared error for the unbiased estimator $\hat{\boldsymbol{\theta}}$ is given by
\begin{equation}\label{eq:MSE_sup}
\text{MSE}(\hat{\boldsymbol{\theta}})\ge\frac{1}{n}\text{Tr}[\mathcal{F}^{-1}(\boldsymbol{\theta})]\ge\frac{Q_\text{min}}{n}=\frac{M}{n}\,,
\end{equation}
where $n$ is the number of measurements performed and $Q_\text{min}=M$.
\end{theorem}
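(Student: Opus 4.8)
The plan is to prove the two inequalities in \eqref{eq:MSE_sup} separately. The first, $\text{MSE}(\hat{\boldsymbol{\theta}}) \geq \frac{1}{n}\text{Tr}[\mathcal{F}^{-1}(\boldsymbol{\theta})]$, is simply the multi-parameter quantum Cram\'er-Rao bound applied to the trace of the covariance matrix, so I would invoke it directly. The substantive claim is the second inequality $\text{Tr}[\mathcal{F}^{-1}(\boldsymbol{\theta})] \geq M$, and the strategy is to bound the diagonal of $\mathcal{F}$ from above and then convert this into a lower bound on its inverse trace.

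First I would compute the diagonal entries $\mathcal{F}_{ll}$. Writing $U(\boldsymbol{\theta}) = A_l R_l(\theta_l) B_l$, where $A_l$ collects all layers applied after the $l$-th rotation and $B_l$ all layers before it, differentiation of $\ket{\psi}=U(\boldsymbol{\theta})\ket{0}$ gives $\partial_l \ket{\psi} = -\frac{i}{2}\tilde{P}_l \ket{\psi}$ with $\tilde{P}_l = A_l P_l A_l^\dagger$. The key structural fact is that $P_l$ is a Pauli string, hence a Hermitian involution with $P_l^2 = I$, so that $\tilde{P}_l$ is also Hermitian with $\tilde{P}_l^2 = I$. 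This immediately yields $\braket{\partial_l \psi}{\partial_l \psi} = \frac{1}{4}\bra{\psi}\tilde{P}_l^2\ket{\psi} = \frac{1}{4}$, and therefore $\mathcal{F}_{ll} = 1 - \langle\tilde{P}_l\rangle_\psi^2 \leq 1$. Summing over $l$ gives $\text{Tr}[\mathcal{F}(\boldsymbol{\theta})] \leq M$.

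Next I would convert this trace bound into the desired statement. Let $\lambda_1,\dots,\lambda_M>0$ be the eigenvalues of the positive-definite $\mathcal{F}$. The Cauchy--Schwarz inequality $\bigl(\sum_l \sqrt{\lambda_l}\,\lambda_l^{-1/2}\bigr)^2 \leq \bigl(\sum_l \lambda_l\bigr)\bigl(\sum_l \lambda_l^{-1}\bigr)$ reads $M^2 \leq \text{Tr}[\mathcal{F}]\,\text{Tr}[\mathcal{F}^{-1}]$. Combining with $\text{Tr}[\mathcal{F}] \leq M$ gives $\text{Tr}[\mathcal{F}^{-1}] \geq M^2/\text{Tr}[\mathcal{F}] \geq M$, with equality precisely when all $\lambda_l = 1$, i.e. $\mathcal{F}=I$, as realized by the NPQC at $\boldsymbol{\theta}_\text{r}$ via \eqref{eq:ref_params}.

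The main obstacle is the diagonal computation: the bound $\mathcal{F}_{ll}\leq 1$ hinges entirely on the Pauli-string structure of the generators, since it is $P_l^2 = I$ that forces $\braket{\partial_l \psi}{\partial_l \psi}=\tfrac14$. For generators lacking this involution property the diagonal is unbounded and the argument fails, so this step is exactly where the restriction to the stated class of circuits is used. The remaining steps --- invoking the Cram\'er-Rao bound and applying Cauchy--Schwarz --- are standard.
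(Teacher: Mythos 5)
Your proposal is correct and follows essentially the same route as the paper: bound the diagonal entries $\mathcal{F}_{ll}=1-\langle P_l\rangle^2\le 1$ using the involution property $P_l^2=I$ of Pauli strings, conclude $\mathrm{Tr}[\mathcal{F}]\le M$, convert this via the eigenvalue inequality $\mathrm{Tr}[\mathcal{F}^{-1}]\ge M^2/\mathrm{Tr}[\mathcal{F}]$, and invoke the multi-parameter quantum Cram\'er--Rao bound. The only cosmetic differences are that you phrase the derivative through the conjugated generator $\tilde{P}_l=A_lP_lA_l^\dagger$ rather than the paper's intermediate state $\ket{\psi_l}$, and you justify the trace inequality by Cauchy--Schwarz where the paper cites the arithmetic--harmonic mean inequality; these are equivalent, and your added observation that equality forces $\mathcal{F}=I$ is a nice touch.
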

The NPQC assumes this lower bound as $\text{Tr}[\mathcal{F}^{-1}(\boldsymbol{\theta}_\text{r})]=M$, making the NPQC one of the optimal sensors within above defined class of PQCs.
\begin{proof}
We now proceed to proof Theorem \ref{theorem:NPQC}.
The derivative acting on the unitary in layer $l$ is given by $\partial_i U_l=\delta_{il}(-i\frac{1}{2}P_l)U_l$, where $\delta_{il}$ is the Kronecker delta. 
We define $U_{[l_1:l_2]}=U_{l_2}\dots U_{l_1}$, with $l_2\ge l_1$ and $\ket{\psi_l}=U_{[1:l]}\ket{0}$. 
With this notation, we find
$\ket{\partial_l \psi}=U_{[l+1:M]}(\partial_l U_l) U_{[1:l-1]}\ket{0}=U_{[l+1:M]}(-i\frac{1}{2}P_l)U_l U_{[1:l-1]}\ket{0}=U_{[l+1:M]}(-i\frac{1}{2}P_l) U_{[1:l]}\ket{0}$.
We can now compute
$\braket{\partial_l \psi}{\partial_l \psi}=\frac{1}{4}$ due to $P_l^2=I$ and $\braket{\psi}{\partial_l \psi}=-i\frac{1}{2}\bra{\psi_l}P_l\ket{\psi_l}$.
The diagonal terms of the quantum Fisher information metric are then given by
\begin{equation}
\mathcal{F}_{ii}=4[\braket{\partial_i \psi}{\partial_i \psi}-\braket{\partial_i \psi}{\psi}\braket{\psi}{\partial_i \psi}]=1-(\bra{\psi_i}P_i\ket{\psi_i})^2\,.
\end{equation}
Due to the eigenvalues of $P_i$ being $\pm 1$, we have $0\le(\bra{\psi_i}P_i\ket{\psi_i})^2\le1$.
Thus, the diagonal entries $\mathcal{F}_{ii}$ are within $0\le\mathcal{F}_ {ii}\le 1$ and the trace of $\mathcal{F}$ is upper bounded by
\begin{equation}\label{eq:f_trace_bound}
\text{Tr}(\mathcal{F})=\sum_{i=1}^M \mathcal{F}_{ii}\le M\,.
\end{equation}
By combining \eqref{eq:f_trace_bound} and Lemma \ref{lemma:inv_bound}, \eqref{eq:MSE_sup} follows immediately.
\end{proof}
\begin{lemma}\label{lemma:inv_bound}
Given a positive semdefinite matrix $\mathcal{A}$ with dimension $M\times M$, $M\in\mathbb{N}$, the trace of the inverse matrix $\mathcal{A}^{-1}$ is lower bounded by $\text{Tr}(\mathcal{A}^{-1})\ge \frac{M^2}{\text{Tr}(\mathcal{A})}$.
\end{lemma}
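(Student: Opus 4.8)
The plan is to reduce this spectral inequality to a single application of the Cauchy--Schwarz inequality. Since $\mathcal{A}$ is positive semidefinite and the statement involves $\mathcal{A}^{-1}$, I take $\mathcal{A}$ to be positive definite (otherwise the inverse does not exist), so that its eigenvalues $\lambda_1,\dots,\lambda_M$ are strictly positive. Using the spectral decomposition together with the unitary invariance of the trace, I would first rewrite both quantities purely in terms of eigenvalues, $\text{Tr}(\mathcal{A})=\sum_{i=1}^M \lambda_i$ and $\text{Tr}(\mathcal{A}^{-1})=\sum_{i=1}^M \lambda_i^{-1}$, turning a matrix statement into a scalar one.

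After clearing the denominator, the claimed bound $\text{Tr}(\mathcal{A}^{-1})\ge M^2/\text{Tr}(\mathcal{A})$ becomes equivalent to $\left(\sum_i \lambda_i\right)\left(\sum_i \lambda_i^{-1}\right)\ge M^2$, which is exactly the arithmetic-mean--harmonic-mean inequality. Concretely, I would apply Cauchy--Schwarz to the vectors with components $a_i=\sqrt{\lambda_i}$ and $b_i=1/\sqrt{\lambda_i}$: since $a_i b_i=1$ for every $i$, one gets $M^2=\left(\sum_i a_i b_i\right)^2\le \left(\sum_i a_i^2\right)\left(\sum_i b_i^2\right)=\left(\sum_i \lambda_i\right)\left(\sum_i \lambda_i^{-1}\right)$. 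Rearranging recovers the lemma, with equality precisely when all $\lambda_i$ coincide, i.e. when $\mathcal{A}\propto I$ --- consistent with the euclidean QFIM $\mathcal{F}(\boldsymbol{\theta}_\text{r})=I$ saturating the bound in Theorem~\ref{theorem:NPQC}.

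There is essentially no serious obstacle: the reduction to eigenvalues and the lone invocation of Cauchy--Schwarz settle the claim. The only point deserving a word of care is the invertibility hypothesis, which forces all eigenvalues to be strictly positive; a singular positive semidefinite $\mathcal{A}$ has no inverse, in which case the bound is read as $+\infty$ and holds trivially.
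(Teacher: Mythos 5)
Your proof is correct and follows essentially the same route as the paper: both reduce the matrix statement to the eigenvalue inequality $\bigl(\sum_i \lambda_i\bigr)\bigl(\sum_i \lambda_i^{-1}\bigr)\ge M^2$, which is the arithmetic-mean--harmonic-mean inequality. The only cosmetic difference is that the paper cites this inequality as a known fact about Pythagorean means, whereas you derive it on the spot via Cauchy--Schwarz (and additionally note the equality case $\mathcal{A}\propto I$ and the singular case), which makes your write-up marginally more self-contained but not a different argument.
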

\begin{proof}
For a sequence of numbers $x_1,x_2,\dots,x_M$ with $x_n\ge0$, the arithmetic mean is always larger than harmonic mean
\begin{equation}
\frac{1}{M}\sum_{n=1}^M x_n \ge \frac{M}{\sum_{n=1}^M \frac{1}{x_n}} \,,
\end{equation}
which is known from the relations of the Pythagorean means. A simple calculation shows
\begin{equation}\label{eq:relation_mean}
\sum_{n=1}^M \frac{1}{x_n}\ge \frac{M^2}{\sum_{n=1}^M x_n}\,.
\end{equation}
The positive semidefinite matrix $A$ has only non-negative eigenvalues $\lambda_n\ge0$. The trace is given by 
\begin{equation}
\text{Tr}(\mathcal{A})=\sum_{n=1}^M\lambda_n\,.
\end{equation}
and accordingly for the inverse
\begin{equation}
\text{Tr}(\mathcal{A}^{-1})=\sum_{n=1}^M\frac{1}{\lambda_n}\,.
\end{equation}
Using \eqref{eq:relation_mean}, we can immediately show
\begin{equation}\label{eq:trace_bound}
\text{Tr}(\mathcal{A}^{-1})\ge \frac{M^2}{\text{Tr}(\mathcal{A})}\,,
\end{equation}
\end{proof}
Using \eqref{eq:trace_bound} and \eqref{eq:f_trace_bound}, we find
\begin{equation}
\text{Tr}(\mathcal{F}^{-1})\ge \frac{M^2}{\text{Tr}(\mathcal{F})}\ge M\,,
\end{equation}
which we insert in \eqref{eq:MSE_sup}.

\end{document}